\newtheorem{observation}{Observation}
\newcommand{\floor}[1]{\left\lfloor #1 \right\rfloor}
\newcommand{\dotpr}[2]{\left\langle #1 , #2 \right\rangle}
\newcommand{\abs}[1]{| #1 |}
\newcommand{\dist}[1]{\lVert #1 \rVert}
\newcommand{\prob}[1]{\mathbb{P}\left[ #1 \right]}
\newcommand{\CNN}{\textit{$c$-approximate nearest neighbor without false negatives}}
\newcommand{\Oc}{\mathcal{O}}
\newcommand{\Rspace}{\mathbb{R}}
\newcommand{\Rdspace}{\mathbb{R}^d}
\newcommand{\sphere}{\mathbb{S}_p^{(d-1)}}
\newcommand{\spheret}{\mathbb{S}^{(d-1)}}
\newcommand{\pfp}{\mbox{$\textsf{p}_{\textsf{fp}}$}}
\newcommand{\tpfp}{\mbox{$\tilde{\textsf{p}}_{\textsf{fp}}$}}
\newcommand{\hpfp}{\mbox{$\hat{\textsf{p}}_{\textsf{fp}}$}}
\title{On fast bounded locality sensitive hashing}%\thanks{This work was partially supported by ERC StG project PAAl 259515 and FET IP project MULTIPLEX 317532.}
\author{Piotr Wygocki}
\institute{Institute of Informatics, University of Warsaw, Poland\\
  \texttt{wygos@mimuw.edu.pl}}
\authorrunning{P. Wygocki} %mandatory. First: Use abbreviated first/middle names. Second (only in severe cases): Use first author plus 'et. al.'
\begin{document}

\maketitle

\begin{abstract}

In this paper, we examine the hash functions expressed as scalar products, i.e., $f(x)=<v,x>$, for some bounded random vector $v$. Such hash functions have numerous applications, but often there is a need to optimize the choice of the distribution of $v$. In the present work, we focus on so-called anti-concentration bounds, i.e. the upper bounds of $\mathbb{P}\left[|<v,x>| < \alpha \right]$. In many applications, $v$ is a vector of independent random variables with standard normal distribution. In such case, the distribution of $<v,x>$ is also normal and it is easy to approximate $\mathbb{P}\left[|<v,x>| < \alpha \right]$. Here, we consider two bounded distributions in the context of the anti-concentration bounds. Particularly, we analyze $v$ being a random vector from the unit ball in $l_{\infty}$ and $v$ being a random vector from the unit sphere in $l_{2}$. We show optimal up to a constant anti-concentration measures for functions $f(x)=<v,x>$. 

As a consequence of our research, we obtain new best results for \newline \textit{$c$-approximate nearest neighbors without false negatives} for $l_p$ in high dimensional space for all $p\in[1,\infty]$, for $c=\Omega(\max\{\sqrt{d},d^{1/p}\})$. These results improve over those presented in \cite{pacukatall}. Finally, our paper reports progress on answering the open problem by Pagh~\cite{Pagh15}, who considered the nearest neighbor search without false negatives for the Hamming distance.

\end{abstract}

\section{Introduction}
Locality sensitive hashing (LSH) functions are hash functions which roughly preserve distance. 
Namely, for two points 'close' to each other in a given metric, the hashes of these points are also 'close' with large probability. 
Analogically, two 'distant' points have 'distant' hashes.
\footnote{In the introduction, we use imprecise terms such as 'close', 'distant', 'small', 'large', etc. in order to avoid introducing complex notation.
These terms are going to be clarified in further sections.}
The concept of LSH is well known and widely used, 
especially in the high dimension nearest neighbor search \cite{DBLP:journals/cacm/AndoniI08,Datar04locality-sensitivehashing,CHAZELLE200824,Pagh15,pacukatall}. 
Normally, one uses LSH to reduce the dimension of a given metric space, usually $l_p^d$ or a Hamming space. 
Common choices of the hash functions are  $f(x) = <x,v>$ or $f(x) = \floor{<x,v>}$, where $v$ is a vector of numbers drawn independently from some probability distribution. 
For instance, the famous Johnson-Linderstrauss Lemma \cite{johnson84extensionslipschitz} can be seen as LSH where $v_i$ 
are independently drawn from the standard normal distribution, for $i\in \{1,\dots d\}$. 
In fact, any distribution with bounded variance produces an LSH function, as $<x,v>$ is a good approximation of $\dist{x}_2$ up to scaling by a constant. 
Such a choice of hash functions has fine theoretical properties. Moreover, they are very cheap to evaluate, which makes them very useful for practical purposes.
The evaluation of a scalar product is proportional to the size of vector representation. 
We say that hash functions with such property are \textit{fast}.
In this paper, we restrict ourselves to such hash functions. 

%Instead of asking if two points $x,y$ (and its hashes) are 'close' to each other, 
%it is more convenient to ask if point $z = x-y$ (and its hash) is 'small'.
For the sake of convenience, instead of considering two points $x,y$ 'close' or 'distant', we can consider one point $z = x-y$ and call it 'small' or 'large' respectively.
Given an LSH function, a false positive is a point which is 'large' but its hash is 'small'. Similarly, a false negative is a point which is 'small' but its hash is 'large'.
Naturally, we would like to avoid both false negatives and false positives. 
Many choices of distributions for LSH functions (e.g. normal distribution) give only probabilistic guaranties for both false negatives and false positives. 
Pacuk~et~al.~\cite{pacukatall} considered hash functions where $v$ is a vector of independent Rademacher variables.
Since Rademacher variable is bounded, the hash of a 'small' vector cannot be too 'large'. Consequently, for such a choice of $v$, it is possible to eliminate false negatives. 
The hash functions induced by bounded distributions will also be called bounded.

In this paper, we study the concentration properties of fast bounded LSH functions. 
The crucial concept of this paper is a so-called anti-concentration measure. 
For a given random vector $v$, we are interested in finding the upper bounds of 
$\prob{|<v,x>| \le \alpha}$,  where is $x\in \sphere$.
If variable $X$ is concentrated in $(-A,A)$, say $P(|X| < A) = 1 -\epsilon$, and density of $X$ is symmetric and quasi concave, then  $\prob{|X| \le \alpha} \ge \frac{\alpha}{A}(1 -\epsilon)$. 
We show that the quasi-concaveness is a crucial property of our functions. Actually, the lack of this property was the reason for the inefficiency of the hash functions considered by \cite{pacukatall}.
With the quasi-concaveness assumption on the density function, we show an optimal, up to a constant, fast bounded hash function.

Based on the hash function, we build an algorithm for the \CNN{}. 
In the classical nearest neighbor search, given an input set and a query point, we would like to find a point from input set which is the closest to the query point.
Another variant involves returning any input point\footnote{In practice, we often consider a version of the algorithm which returns all input points within a given radius from the query point. 
Here we consider a one-point query outputs to keep the calculations plain for the reader's convenience. However, all presented results easily transfer to multi-point query outputs. }
within the distance $r$ from the query point, for a given parameter $r$, or reporting that such point does not exist.
Unfortunately, these problems do not have efficient solutions for high dimensional spaces. The existence of such algorithms, with the query and preprocessing complexities not
depending exponentially on the dimension, would disprove the strong exponential time hypothesis \cite{Williams2005}.
In order to overcome this obstacle, we consider the $c$-approximate nearest neighbor, which allows false positives closer than $c r$ to the query point.

As mentioned, known algorithms for the $c$-approximate nearest neighbor give Monte Carlo guaranties. In this paper, we guaranty no false negatives. 
Some known derandomizations result only in theoretical gain since it is easy to tune a probabilistic algorithm to have the exponentially small chance of error 
(e.g. probabilistic prime number testing). This is not true in our case.
Consider a situation where there are many possible result points within the radius $r$ from the query point. 
In such a case, standard LSH algorithms \cite{motwani} need an exponentially large number of hash functions 
to be able to exponentially decrease the chance of a false negative.
In this paper, we improve complexities of the algorithms for the \CNN{} in $l_p$ for all $p\in [1,\infty)$.

The presented algorithms have two stages. In the preprocessing stage, we prepare data structures for further queries. 
In this phase, we use only the input set and the complexity is expected to be polynomial, possibly close to $\Oc(n)$.
In the second stage, we perform the queries. Each query should have the complexity $o(n)$, in order to outrun the trivial full scan algorithm.
In designing the algorithm, we usually need to choose between different configurations of complexities. 
Larger processing time can help reduce the query time and vice versa. In this work, we consider different trade-offs between the query and preprocessing times. 
Improving the hash functions helps us reduce both the query time and the preprocessing time of the \CNN{} for $c = \Theta(\max(\sqrt{d}, d^{1/p}))$ in comparison with the results of \cite{pacukatall}.
Under natural assumptions, we show the hash functions with optimal, up to the multiplicative constant, anti-concentration bounds.

\section{Related Work}\label{rw}

%Our solution is based on the hash functions $\hat h_p$ and $\tilde h_p$
%(Section \ref{hashing}), which are later used as building blocks. 

\subsection{The anti-concentration measures}

In this paper we focus on the anti-concentration measures for $<v,x>$, for $x\in \sphere$. 
Let us start with a general bound for functions on a sphere.
Particularly, in the \textit{small ball probability theorem} for some function~$f$ on the unit sphere $\spheret$, we bound $\prob{|f(x)| \le \alpha}$.
The theorem conjectured in \cite{small_ball} and proved in \cite{CORDEROERAUSQUIN2004410} implies that for any Lipschitz function~$f$, with Lipschitz constant $L$,
whose average over the sphere is $1$, we have $\prob{|f(x)| \le \alpha} \le \alpha^{c/L^2}$, for some constant $c$ and $x \in \spheret$.

Carbery and Wright~\cite{4beacfa095ba409292c4365e98dc2dcd} show the following bound for polynomial functions. 
There exists an absolute constant $c>0$ such that, if $Q:\mathbb{R} \rightarrow \mathbb{R}$ is a polynomial of degree at most $k$ and 
$\mu$ is a log-concave probability measure on $\mathbb{R}^m$, then for all $\alpha > 0$:
\[\Big(\int Q^2d\mu\Big)^{\frac{1}{2k}}\mu\{x \in \mathbb{R}^m: |Q(x)| \le \alpha\} \le ck\alpha^{\frac{1}{k}}.\]

Since log-concave probability measures are strongly connected with the surface measure (see Lemma 2  in \cite{small_ball} ), 
the above result gives an alternative way of proving the bounds presented in Section \ref{algo_2}. 
The anti-concentration bound achievable using \cite{4beacfa095ba409292c4365e98dc2dcd}, gives worse constants than the alternative proof provided in this article. 
This is important since this constant is in the exponent of the complexities of the \CNN{} algorithm.

The  anti-concentration measures are strongly connected with the
Littlewood-Offord theory. Consider L\'{e}vy concentration function: 
\[Q(X,\lambda) = \sup_x{\prob{X \le x \le X + \lambda}}.\] 
We have
$\prob{|X| \le \alpha} \le Q(X,2\alpha).$ So any bound on the L\'{e}vy concentration function is also a bound for our problem.
Bobkov~et~al.~\cite{Bobkov2015} considered bounds on the L\'{e}vy concentration function for $X$ being the sum of independent random variables with log-concave density function.
Particularly (Theorem 1.1 in \cite{Bobkov2015}):
\begin{theorem}\label{th_q}
    If $X_1, \dots, X_k$ are independent random variables with log-concave distribution, set $S = \sum_i X_k$.  Then for all $\lambda \ge 0$
    \[Q(S, \lambda) \le \frac{\lambda}{\sqrt{Var(S) + \frac{\lambda^2}{12}}}. \]
\end{theorem}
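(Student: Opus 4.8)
The plan is to reduce the assertion to a classical sharp inequality relating the supremum of a log-concave density to its variance, via a smoothing trick that converts the L\'{e}vy concentration function into a sup-norm. First, by Pr\'{e}kopa's theorem a convolution of independent log-concave distributions is log-concave, so $S$ has a log-concave density $p$ (if all $X_i$ are degenerate the claim is trivial: $\mathrm{Var}(S)=0$ and for $\lambda>0$ the right-hand side equals $1$). Since $Q(S,\lambda)=Q(S/\lambda,1)$ and $\mathrm{Var}(S/\lambda)=\mathrm{Var}(S)/\lambda^{2}$, both sides transform consistently under $S\mapsto S/\lambda$, so it suffices to treat $\lambda=1$, i.e. to prove $Q(S,1)^{2}\bigl(\mathrm{Var}(S)+\tfrac1{12}\bigr)\le 1$.

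\textbf{Key step: smoothing by a uniform.} Let $U$ be uniform on $[-\tfrac12,\tfrac12]$, independent of $S$. The density $g$ of $S+U$ is $g(y)=\mathbb{P}\bigl[\,y-\tfrac12\le S\le y+\tfrac12\,\bigr]$, so that $\lVert g\rVert_{\infty}=\sup_{y}\mathbb{P}\bigl[S\in[y-\tfrac12,y+\tfrac12]\bigr]=Q(S,1)$. By independence $\mathrm{Var}(S+U)=\mathrm{Var}(S)+\mathrm{Var}(U)=\mathrm{Var}(S)+\tfrac1{12}$, and $S+U$ is again log-concave by Pr\'{e}kopa, since the uniform density is log-concave. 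Hence the target inequality is exactly $\lVert g\rVert_{\infty}^{2}\,\mathrm{Var}(g)\le 1$ for the log-concave density $g$.

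\textbf{The remaining lemma.} It remains to show: every log-concave probability density $g$ on $\mathbb{R}$ satisfies $\lVert g\rVert_{\infty}^{2}\,\mathrm{Var}(g)\le 1$, with equality approached by the one-sided exponential. This product is invariant under affine changes $x\mapsto ax+b$, so one may normalize $\lVert g\rVert_{\infty}=1$ and $\mathbb{E}[X]=0$ (with $X\sim g$) and prove $\mathbb{E}[X^{2}]\le 1$. The functional $g\mapsto \mathbb{E}[X^{2}]$ is linear in $g$, while the constraints (log-concavity, $\lVert g\rVert_{\infty}\le 1$, fixed mean) cut out a convex set; the maximum is therefore attained at an extreme point, and by the standard localization principle such extreme points are truncated log-affine densities $c\,e^{\theta x}\mathbf{1}_{[a,b]}$ (the uniform density being $\theta=0$). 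For this finitely-parametrized family one verifies $\mathbb{E}[X^{2}]\le 1$ directly, the supremum $1$ being approached as $b\to\infty$ with $g(x)=e^{-x}\mathbf{1}_{[0,\infty)}$; combining with the previous step yields $Q(S,\lambda)\le \lambda/\sqrt{\mathrm{Var}(S)+\lambda^{2}/12}$.

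\textbf{Expected obstacle.} I expect the main difficulty to be the lemma, specifically making the reduction-to-extreme-points step rigorous, or else supplying a self-contained proof of $\lVert g\rVert_{\infty}^{2}\,\mathrm{Var}(g)\le 1$ that bypasses localization, e.g. by bounding $g$ on each side of its mode by an exponential of matching one-sided slope and optimizing over the two slopes. The smoothing reduction itself is clean once one observes that convolving with $\mathrm{Unif}[-\tfrac12,\tfrac12]$ simultaneously adds exactly $\tfrac1{12}$ to the variance, turns concentration on windows of length $1$ into the sup-norm of the new density, and preserves log-concavity.
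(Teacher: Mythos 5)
The paper does not prove this statement; it only cites it as Theorem~1.1 of Bobkov et al.\ \cite{Bobkov2015}, so there is no paper proof to compare against. Judged on its own, your outline has the right architecture and in fact explains where the $\lambda^{2}/12$ term comes from: scale to $\lambda=1$, smooth $S$ by an independent $U\sim\mathrm{Unif}[-\tfrac12,\tfrac12]$ (which converts $Q(S,1)$ into $\lVert g\rVert_\infty$, adds exactly $\tfrac1{12}$ to the variance, and preserves log-concavity via Pr\'ekopa), and then invoke the one-dimensional fact that every log-concave density $g$ satisfies $\lVert g\rVert_\infty^{2}\,\mathrm{Var}(g)\le 1$ (Hensley's inequality), with the one-sided exponential as the extremizer. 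The scaling step and the smoothing step check out exactly.

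There is, however, a genuine gap in how you justify the remaining lemma. You write that log-concavity together with $\lVert g\rVert_\infty\le 1$ and a fixed mean ``cut out a convex set,'' so that a linear functional is maximized at an extreme point. That premise is false: the set of log-concave densities is \emph{not} convex (take $g_0=\mathbf{1}_{[0,1]}$ and $g_1=\mathbf{1}_{[2,3]}$; both are log-concave, but $(g_0+g_1)/2$ is bimodal and not log-concave). So the extreme-point/Krein--Milman reasoning cannot be applied as stated. There is a correct route in this spirit --- the Fradelizi--Gu\'edon characterization of extreme points of sets of $s$-concave measures under finitely many linear constraints, or the Lov\'asz--Simonovits bisection/localization argument --- and it does yield exactly the truncated log-affine extremizers $c\,e^{\theta x}\mathbf{1}_{[a,b]}$ you describe, but it is a nontrivial theorem, not a consequence of convexity. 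Your fallback suggestion (``bound $g$ on each side of its mode by an exponential of matching one-sided slope'') also needs more care: a pointwise \emph{upper} bound on $g$ does not by itself control $\mathrm{Var}(g)$ from above, since mass pushed out to the tails increases the variance while still respecting an upper bound. Since you explicitly flagged the lemma as the expected obstacle, this is a known weak point rather than an oversight, but as written it is where the proof fails to close. (Minor slip: when all $X_i$ are degenerate the right-hand side is $\sqrt{12}$, not $1$; the conclusion that the claim is then trivial is still correct, and for $\lambda=0$ with degenerate $S$ the ratio is $0/0$, so one should exclude that corner case.)
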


\subsection{The nearest neighbors}

There exist an efficient $c$-nearest neighbor algorithm for $l_1$ \cite{motwani} with the query and preprocessing complexity equal to $\Oc(n^{1/c})$ and $\Oc(n^{1+1/c})$ respectively  
and a near to optimal algorithm for $l_2$ \cite{DBLP:journals/cacm/AndoniI08}  with query and preprocessing complexity equal to $\Oc(n^{1/c^2+ o(1)})$ and $\Oc(n^{1+1/c^2 + o(1)})$ respectively. 
Moreover, the algorithms presented in \cite{motwani} work for $l_p$ for any $p\in [1,2]$. 
There are also  data dependent algorithms which take into account the actual distribution of the input set \cite{data-depended-hashing}.

Pagh \cite{Pagh15} considered the \CNN{} for the Hamming space, obtaining results close to the results of \cite{motwani}.
Pagh \cite{Pagh15} showed that the bounds of his algorithm for $cr = \log( n/k)$ differ by at most a factor of $\ln 4$ in
the exponent in comparison to the bounds of \cite{motwani}. Indyk~\cite{DBLP:conf/focs/Indyk98} provided a deterministic algorithm 
for $l_{\infty}$ for $c = \Theta(\log_{1+\rho}\log{d})$ with storage $\Oc(n^{1+\rho}\log^{O(1)}n)$ and query time  $\Oc(\log^{O(1)}n)$ for some tunable parameter $\rho$.
Also, Indyk~\cite{Indyk:2007:UPE:1250790.1250881} considered deterministic mappings $l_1^n \rightarrow l_2^m$, for $m=n^{1+O(1)}$, which might be useful for constructing
efficient algorithms for the \CNN{} \cite{Pagh15}.

Eventually the authors of \cite{pacukatall} presented algorithms  for every $p \in [1,\infty]$ and  $c>\tau_p = \sqrt{8}\max \{d^{\frac{1}{2}} d^{1-\frac{1}{p}}\}$.
The considered hash function family is of form $h_p(x) = \floor{<v,x>}$, with the following properties:

\begin{itemize}
 \item \textbf{Close points transform to close hashes}:\newline
 If $\dist{x-y}_p < 1$ then $|h_p(x) - h_p(y)| \le 1$.
 \item \textbf{The probability of false positives}:

For $x,y \in \Rdspace$  such that $\dist{x-y}_p > cr$, it holds:
\[ \pfp = \prob{|h_p(x) - h_p(y)| \le 1} < 1-\frac{(1-\frac{\tau}{c})^2}{2}.\]
\end{itemize}

For such LSH functions the following holds (Theorems 2. and 3. in \cite{pacukatall}):
\begin{theorem}\label{prev}
 For $c>\tau_p = \sqrt{8}\max \{d^{\frac{1}{2}} d^{1-\frac{1}{p}}\}$ and for a large enough $n$, $|P|$ being the size of the result, $p \in [1,\infty]$ we have
the \CNN{} in $l_p$  with the following complexities:
\begin{itemize}
 \item \textbf{for the 'fast query' version:} \newline 
 \begin{itemize}
     \item Preprocessing time: $\Oc(n (\gamma d\log{n}  + (\frac{n}{d})^\gamma))$,
    \item Memory usage: $\Oc(n (\frac{n}{d})^\gamma)$,
    \item Expected query time: $\Oc(d (|P| + \gamma \log(n) + \gamma d))$,
 \end{itemize}
 where $\gamma = \frac{\ln{3}}{-\ln{\pfp}}$ .

\item   \textbf{for the 'fast preprocessing' version:}  \newline
 \begin{itemize}
     \item Preprocessing time: $\Oc(n d\log{n})$,
     \item Memory usage: $\Oc(n\log{n} )$,
     \item Expected query time: $\Oc(d (|P| +  n^{\frac{b}{a+b}}(\frac{b}{a})^{\frac{a}{b+a}}))$,
 \end{itemize}
 where $a = -\ln{\pfp}$, $b=\ln{3}$.
\end{itemize}
\end{theorem}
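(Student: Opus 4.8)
These are Theorems~2 and~3 of \cite{pacukatall}, and to prove them I would build a data structure by the amplification recipe of \cite{motwani}, arranged so that no false negative can ever occur. The plan is to fix an integer $L$, draw $L$ independent copies $h_1,\dots,h_L$ of the hash function $h_p$ of the family, and index every input point $x$ by the composite key $g(x)=(h_1(x),\dots,h_L(x))\in\mathbb{Z}^L$. The first property of the family is a \emph{deterministic} implication: if $\dist{x-q}_p<1$ then $|h_j(x)-h_j(q)|\le 1$ for every $j$, so $g(x)$ always lies in the box $B(q)=\prod_{j=1}^{L}\{h_j(q)-1,\,h_j(q),\,h_j(q)+1\}$ of $3^{L}$ keys around $g(q)$, for every realization of the random hashes. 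Hence a single composite-key table already rules out false negatives, as long as a query inspects all of $B(q)$; the role of $L$ is only to suppress false positives, so, unlike in ordinary LSH, no outer repetition is needed.

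There are two natural ways to pay for the box $B(q)$, and they give the two versions. For the fast-query version I would, at preprocessing time, insert a pointer to each point $x$ into all $3^{L}$ buckets of $B(x)$; a query $q$ then only reads the single bucket $g(q)$, which by the containment above already holds every close input point. This costs $\Oc(n\,3^{L})$ pointers and $\Oc(n\,L\,d)$ time for the hash evaluations. For the fast-preprocessing version I would instead store each $x$ only in bucket $g(x)$, spending $\Oc(n\log n)$ memory and $\Oc(n\,d\log n)$ time once $L=\Oc(\log n)$, but then a query must explicitly enumerate all $3^{L}$ keys of $B(q)$.

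For the false-positive count, take a far point $x$, i.e. $\dist{x-q}_p>cr$. The second property gives $\prob{|h_j(x)-h_j(q)|\le 1}<\pfp$ for each $j$, and by independence of the levels $\prob{g(x)\in B(q)}<\pfp^{L}$, so by linearity of expectation a query examines at most $n\,\pfp^{L}$ far points in expectation, each rejected in $\Oc(d)$ time. In the fast-query version I would choose $L$ with $3^{L}=(n/d)^{\gamma}$, where $\gamma=\ln 3/(-\ln\pfp)$; this fixes the memory at $\Oc(n(n/d)^{\gamma})$ and forces $n\,\pfp^{L}=\Oc(d)$, and, up to the routine bookkeeping (hash evaluation $\Oc(\gamma d\log n)$, plus verifying the genuine and the spurious candidates), this yields the expected query time $\Oc(d(|P|+\gamma\log n+\gamma d))$ and the preprocessing time $\Oc(n(\gamma d\log n+(n/d)^{\gamma}))$. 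In the fast-preprocessing version I would instead pick $L$ to minimise the query cost $\Theta(3^{L})+\Theta(d)\cdot n\,\pfp^{L}$; the optimum is attained at $L=\Theta(\ln n/(a+b))$ with $a=-\ln\pfp$ and $b=\ln 3$, where the search cost and the false-positive cost balance, and tracking the constants (they depend on $\gamma=b/a$ and on $d$) gives the query time $\Oc(d(|P|+n^{b/(a+b)}(b/a)^{a/(b+a)}))$ of the statement, while preprocessing and memory remain $\Oc(nd\log n)$ and $\Oc(n\log n)$.

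Correctness of the \CNN{} is then immediate: whenever an input point is close to $q$ it is examined — it was pre-stored at $g(q)$ in the first version, or it lies in the fully enumerated box $B(q)$ in the second — and any examined point at distance $\le cr$, in particular that one, is a legal answer, so the algorithm never reports a false negative; the $\Oc(1)$-expected-time bucketing of the keys and the deterministic memory and preprocessing counts are standard. The one genuine difficulty, and the reason these bounds are weaker than the textbook LSH bounds of \cite{motwani}, is the factor $3^{L}$ that the two-sided spread of close points forces into either the storage or the query: decreasing $L$ inflates the false-positive count $n\,\pfp^{L}$, while increasing it inflates $3^{L}$, so the exponent $\gamma=\ln 3/(-\ln\pfp)$ is essentially locked in — which is precisely why the later sections of the paper work to make $\pfp$, and hence $\gamma$, as small as possible.
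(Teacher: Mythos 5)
This theorem is not proved in the present paper: it is quoted verbatim from \cite{pacukatall} (the text explicitly says ``For such LSH functions the following holds (Theorems 2.\ and 3.\ in \cite{pacukatall})''), and the paper uses it as a black box, supplying only the improved hash families and the resulting bounds on $\pfp$. There is therefore no in-paper proof to compare against. That said, your reconstruction is correct and is, as far as the citation allows one to verify, the argument \cite{pacukatall} actually gives: the crucial point, which you identify, is that the ``close points have close hashes'' property is \emph{deterministic}, so the composite key $g(x)$ of any close point is guaranteed to lie in the $3^{L}$-key box $B(q)$; this is what removes false negatives without any outer repetition. Your two ways of paying for the box (pre-replicate each point into all $3^{L}$ neighboring buckets versus enumerate $B(q)$ at query time) match the ``fast query'' and ``fast preprocessing'' variants, and the two choices of $L$ — fixing $3^{L}=(n/d)^{\gamma}$ versus balancing $3^{L}$ against $n\,\pfp^{L}$, which yields $L=\ln(na/b)/(a+b)$ and hence the $n^{b/(a+b)}(b/a)^{a/(b+a)}$ term — reproduce the stated exponents. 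The only things you leave implicit are the routine bucketing details (the $\gamma\log n$ and $\gamma d$ terms in the query bound come from the $L=\Theta(\gamma\log n)$ hash evaluations and the $\Oc(d)$-time verification of the $\Oc(d)$ expected surviving far candidates), but these are genuinely bookkeeping. In short: right approach, and consistent with what the cited source does; just note that you are re-deriving an imported theorem rather than one proved here.
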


In this paper, we follow the approach of \cite{pacukatall}. We provide hash functions that satisfy the property of mapping close points to the same values. 
Using the enhanced hash functions we decrease the probability of false positives, which leads to the improvement of the algorithms complexities.
Theorem \ref{main} in the next Section summarizes the obtained results.

\section{Our contribution}
We introduce two classes of hash functions $\hat h_p$ and $\tilde h_p$. 
$\hat h_p$ transforms a given point $x$ to $<v,x>$, where $v$ is a random vector from $l_{\infty}$ ball.
In $\tilde h_p$, we apply the scalar product with a random vector  from sphere $\spheret$.
We prove the anti-concentration bounds for both function families. We follow the schema described in \cite{pacukatall}, which gives the following result:

\begin{theorem}\label{main}
For any  $p\in [1,\infty]$ and for any $c > \tau_p$, we %  \in \{\hat \tau_p, \tilde \tau_p\}$, we
        show data structures for the \CNN{} with
\begin{itemize}
    \item $\Oc(n^{1+\frac{\ln{3}}{\ln(c/\tau_p)}})$ preprocessing time  and $\Oc(\log{n})$ query time for the 'fast query' algorithm,
    \item $\Oc(n\log{n})$ preprocessing time  and $\Oc(n^{\frac{\ln{3}}{\ln(3c/\tau_p)}})$ query time for the 'fast preprocessing' algorithm.
\footnote{For simplicity, we omitted the factors dependent on $d$, see \cite{pacukatall} for more details.}
\end{itemize}
We distinguish two cases of the theorem for hash functions $ \hat h_p$ and  $ \tilde h_p$ respectively:
\begin{enumerate}
\item $\tau_p= \hat \tau_p =4\sqrt{3}d^{ \max\{1-1/p, 1/2\}}$,
\item $\tau_p= \tilde \tau_p = 2d^{1/2+|1/2-1/p|}$.
\end{enumerate}
\end{theorem}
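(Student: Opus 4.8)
The plan is to verify, for each of the two families, the two structural ingredients that the data structures of~\cite{pacukatall} rely on --- that close points are never separated (which kills false negatives) and that far points collide with probability at most~$\pfp$ --- and then to substitute the improved value of $\pfp$ into Theorem~\ref{prev}. Let $q$ be the conjugate exponent, $1/p+1/q=1$. Since $\sup_v\dotpr{v}{z}$ over the $l_\infty$ ball equals the dual norm $\dist{z}_1$, and over $\spheret$ equals $\dist{z}_2$ by Cauchy--Schwarz, I set
\[ B_p \;=\; \sup_{\dist{z}_p\le 1}\dist{z}_1 \;=\; d^{1-1/p}\ \text{ for }\hat h_p, \qquad B_p \;=\; \sup_{\dist{z}_p\le 1}\dist{z}_2 \;=\; d^{\max\{0,\,1/2-1/p\}}\ \text{ for }\tilde h_p, \]
so that $\abs{\dotpr{v}{x-y}}\le B_p\dist{x-y}_p$ always. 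Working with the rescaled hash $h(x)=\floor{\dotpr{v}{x}/(B_p r)}$, the first ingredient is then immediate: if $\dist{x-y}_p\le r$ then $\abs{\dotpr{v}{x-y}}/(B_p r)\le 1$, hence $\abs{h(x)-h(y)}\le 1$ irrespective of the draw of $v$.

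The heart of the argument is the anti-concentration inequality
\[ \sup_{z\in\sphere}\ \prob{\abs{\dotpr{v}{z}}\le\alpha}\ \le\ \mu_p\,\alpha, \]
and this is where quasi-concaveness of the density enters: a Rademacher vector, as used in~\cite{pacukatall}, instead puts an atom of size $\Theta(d^{-1/2})$ near $0$, so no bound linear in $\alpha$ is possible. For $\hat h_p$ the coordinates of $v$ are i.i.d.\ uniform on $[-1,1]$, so $\dotpr{v}{z}=\sum_i z_iv_i$ is a sum of independent log-concave summands with $\mathrm{Var}=\dist{z}_2^2/3$; Theorem~\ref{th_q}, together with $\prob{\abs{X}\le\alpha}\le Q(X,2\alpha)$ and the elementary bound $\dist{z}_2\ge d^{-\max\{0,1/p-1/2\}}$ valid for $z\in\sphere$, gives $\mu_p\le 2\sqrt3\,d^{\max\{0,1/p-1/2\}}$. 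For $\tilde h_p$ I would write $\dotpr{v}{z}=\dist{z}_2\,\dotpr{v}{z/\dist{z}_2}$ and use that the one-dimensional marginal of the uniform law on $\spheret$ has density $\propto(1-t^2)^{(d-3)/2}$, which for $d\ge 3$ is unimodal, symmetric, and bounded above by $\tfrac12\sqrt d$ (a short Beta/Gamma estimate); dividing by $\dist{z}_2$ with the same lower bound yields $\mu_p\le d^{1/2+\max\{0,1/p-1/2\}}$.

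Combining the pieces: if $\dist{x-y}_p>cr$, a collision $\abs{h(x)-h(y)}\le 1$ forces $\abs{\dotpr{v}{x-y}}<2B_p r$, so with $z=(x-y)/\dist{x-y}_p\in\sphere$,
\[ \pfp\ \le\ \prob{\abs{\dotpr{v}{z}}< 2B_pr/\dist{x-y}_p}\ <\ \prob{\abs{\dotpr{v}{z}}< 2B_p/c}\ \le\ \frac{2\mu_pB_p}{c}\ =\ \frac{\tau_p}{c}, \]
where $\tau_p:=2\mu_pB_p$ works out to $\hat\tau_p=4\sqrt3\,d^{\max\{1-1/p,1/2\}}$ in the first case and to $\tilde\tau_p=2\,d^{1/2+\abs{1/2-1/p}}$ in the second, the exponents collapsing because $\max\{0,1/p-1/2\}+\max\{0,1/2-1/p\}=\abs{1/2-1/p}$; in particular $c>\tau_p$ is exactly the hypothesis that makes $\pfp<1$. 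I then feed $h$ into the data structures of~\cite{pacukatall} for the \CNN{}, which use only these two properties and treat $\pfp$ as a parameter. Since $-\ln\pfp\ge\ln(c/\tau_p)$, the parameter $\gamma=\ln3/(-\ln\pfp)$ is at most $\ln3/\ln(c/\tau_p)$, which turns the ``fast query'' preprocessing $\Oc(n(\gamma d\log n+(n/d)^\gamma))$ into $\Oc(n^{1+\ln3/\ln(c/\tau_p)})$ and the query into $\Oc(\log n)$ once the factors depending on $d$ and $|P|$ are absorbed, while for the ``fast preprocessing'' version $a=-\ln\pfp$, $b=\ln3$ give $b/(a+b)\le\ln3/\ln(3c/\tau_p)$, so the query time $\Oc(d(|P|+n^{b/(a+b)}(b/a)^{a/(a+b)}))$ becomes $\Oc(n^{\ln3/\ln(3c/\tau_p)})$ (for $n$ large enough).

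I expect the genuine difficulty to sit in the anti-concentration step, and in particular in the density bound for $\tilde h_p$: one must control the maximal density of the projection of the uniform measure on $\spheret$ sharply enough that the leading constant in $\tilde\tau_p$ is as small as $2$, since $\tau_p$ enters inside the logarithm in the exponents of every complexity above, so a loose constant would directly degrade the algorithm. By comparison, the norm comparisons that produce $B_p$, the application of Theorem~\ref{th_q}, and the reduction to Theorem~\ref{prev} are routine once $\mu_p$ and $B_p$ are pinned down.
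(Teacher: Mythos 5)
Your proposal mirrors the paper's proof closely: for $\hat h_p$ you invoke the same Bobkov--Chistyakov--G\"{o}tze bound on the L\'{e}vy concentration function (Theorem~\ref{th_q}), and for $\tilde h_p$ the paper also arrives, after expressing the CDF as a regularized incomplete Beta function $I_{\alpha^2}(1/2,(d-1)/2)$ and differentiating, at exactly the density-at-$0$ bound $1/B(1/2,(d-1)/2)\le\sqrt{d}/2$ that you propose, estimated via the same Greni\'{e}-type Beta inequality; the norm comparison (Observation~\ref{bounds}), the computation $\tau_p=2\mu_pB_p$, and the plug-in to Theorem~\ref{prev} are identical. The one small incompleteness is that your density argument for $\tilde h_p$ genuinely requires $d\ge 3$ (for $d=2$ the marginal density $\propto(1-t^2)^{-1/2}$ is unbounded near $\pm 1$), whereas the paper covers $d=2$ by observing that $f(\alpha)=\frac{2}{\pi}\arcsin\alpha$ is \emph{convex} and using the secant bound $f(\alpha)\le(1-\alpha)f(0)+\alpha f(1)=\alpha\le\alpha\sqrt{2}$; you should add this one-line case to make the claim hold for all $d$.
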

The $\hat h_p$ functions give better results for all $p\in[1,2)$, while the $\tilde h_p$ functions work better for $p\in[2,\infty]$. 
Let us now proceed to proving the Theorem \ref{main}. 
We prove case 1. and case 2. in Sections \ref{algo} and \ref{algo_2} respectively.

\section{Definitions}
 The input set will always be assumed to contain $n$
points. In nearest neighbor algorithms, we would like to find points within given distance $r$ from a given query point. 
W.l.o.g, throughout this work we will assume, that $r$ -- a given radius
equals 1 (otherwise all vectors might be rescaled by $1/r$). For $x,y \in \Rdspace$, $<x,y>$ denotes the standard scalar product, i.e.
$<x,y> = \sum_{i=1}^d x_i y_i$.
$\dist{\cdot}_p$ denotes the standard norm in $l_p$, i.e., $\dist{x}_p = (\sum_i{|x_i|^p})^{1/p}$.
$\sphere$ denotes a sphere in $l_p$, i.e., $\sphere = \{x:x \in \Rdspace, \dist{x}_p = 1\}$.
We will write $\spheret$ instead of $\spheret_2$. $U(a,b)$ denote the uniform distribution on the interval $[a,b]$.
The \textit{i.i.d} is the abbreviation for independent and identically distributed.

\section{The algorithm}\label{algo}
%Nearest neighbors without false negatives for $c = \Oc(d^{\max\{1/2, 1-1/p\}})$}\label{hashing

The authors of \cite{pacukatall} introduced a general framework for solving the \CNN{} in $l_p$ for any $p \in [1,\infty]$. 
The framework was based on the hash functions $h_p$. Let us recall that
$h_p(x) = \floor{d^{1/p - 1}\dotpr{x}{v}}$,
where $v \in \{-1,1\}^d$ is a random vector satisfying: $\prob{v_i = 1} = 1/2$.
In this section, we will introduce new hash functions $\hat h_p$, which improves over the $h_p$ for $p\in [1,\infty]$.
Particularly, the probability of false positives is decreased, which leads to better complexities of the \CNN{} algorithm for $c = \Theta(d^{\max\{1/2, 1-1/p\}})$.

Given a vector $x \in \Rspace^d$ such that $\dist{x}_p > c$, the probability of a false positive can be bounded as follows \cite{pacukatall}:
\begin{displaymath}   
    \pfp =  \prob{|h_p(x) - h_p(y)| \le 1} < 1-\frac{(1-\frac{\sqrt{8d}}{c})^2}{2}
    .
\end{displaymath}

Even for very large $c$, $\pfp$ is always greater than $1/2$. This must be the
case, since for an arbitrarily large vector $x = (C,C,0,0, \dots, 0)$, the
probability that this vector will be mapped to $0$ equals $1/2$. To overcome
this obstacle, we introduce a new hash function:

\[\hat h_p(x) = \floor{d^{1/p - 1} \dotpr{w}{x}}, \]
 where $w$ is a vector of independent random variables: $w_i \sim U(-1,1)$.

To  bound the probability of false positives, we need to be able to bound the probability of
$\prob{|\dotpr{w}{x}| < \alpha}$: 

\begin{observation}[Anti-concentration bound for a uniform distribution]\label{anti1}
    Let $x\in\Rdspace$ be a fixed vector and $w\in \Rdspace$ be a vector of independent random variables  with $U(-1,1)$ distribution, then
    \[\prob{|\dotpr{w}{x}| < \alpha} \le \frac{2\sqrt{3} \alpha}{\dist{x}_2},\]
\end{observation}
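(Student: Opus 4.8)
The plan is to view $\dotpr{w}{x}$ as a sum of independent \emph{log-concave} random variables and then to apply Theorem~\ref{th_q} directly. We may assume $x \neq 0$, since otherwise the right-hand side is infinite and there is nothing to prove. Write $S = \dotpr{w}{x} = \sum_{i\,:\,x_i \neq 0} w_i x_i$; discarding the coordinates with $x_i = 0$ ensures that every summand is a genuine (non-degenerate) random variable. Each $w_i x_i$ has the distribution $U(-|x_i|,|x_i|)$, which is log-concave, so the summands satisfy the hypotheses of Theorem~\ref{th_q}.

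The next step is to compute $Var(S)$. Since $Var(U(-1,1)) = \tfrac{1}{3}$, we get $Var(w_i x_i) = \tfrac{x_i^2}{3}$, and by independence $Var(S) = \sum_i \tfrac{x_i^2}{3} = \tfrac{1}{3}\dist{x}_2^2$. Using the relation $\prob{|S| < \alpha} \le \prob{|S| \le \alpha} \le Q(S, 2\alpha)$ noted in the text (the interval $[-\alpha,\alpha]$ has length $2\alpha$) and applying Theorem~\ref{th_q} with $\lambda = 2\alpha$, we obtain
\[
\prob{|\dotpr{w}{x}| < \alpha} \;\le\; Q(S, 2\alpha) \;\le\; \frac{2\alpha}{\sqrt{Var(S) + \tfrac{\alpha^2}{3}}} \;\le\; \frac{2\alpha}{\sqrt{Var(S)}} \;=\; \frac{2\alpha}{\dist{x}_2/\sqrt{3}} \;=\; \frac{2\sqrt{3}\,\alpha}{\dist{x}_2},
\]
which is exactly the claimed bound.

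I do not expect any real obstacle here: the statement is a one-line consequence of Theorem~\ref{th_q} once the summands are identified as log-concave. The only points needing a word of care are verifying log-concavity of the uniform law on an interval (immediate, since its density is the indicator of a convex set up to normalization), getting the constant right in $Var(U(-1,1)) = 1/3$, and throwing away the zero coordinates so that Theorem~\ref{th_q} applies verbatim. Dropping the $\tfrac{\alpha^2}{3}$ term under the square root only loses a harmless factor and keeps the final constant clean; one could retain it to get the marginally stronger bound $2\alpha/\sqrt{\dist{x}_2^2/3 + \alpha^2/3}$ if desired.
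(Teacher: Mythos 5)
Your proof is correct and follows essentially the same route as the paper: identify $S = \sum_i w_i x_i$ as a sum of independent log-concave summands, bound $\prob{|S|<\alpha}$ by $Q(S,2\alpha)$, apply Theorem~\ref{th_q}, and compute $Var(S)=\dist{x}_2^2/3$. The only addition you make is explicitly discarding zero coordinates so each summand is non-degenerate, which is a reasonable bit of care the paper leaves implicit.
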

\begin{proof}
To proof this observation, we apply the general bounds  for the L\'{e}vy concentration function for log-concave distributions presented in \cite{Bobkov2015}.
    Let $X_i = w_i x_i$ and $S = \sum_i X_k$. We have
    \[\prob{|\dotpr{w}{x}| < \alpha} = \prob{|S| < \alpha} \le Q(S, 2\alpha).\]
    Since the uniform distribution is log-concave, by applying Theorem \ref{th_q} we get:
    \[\prob{|\dotpr{w}{x}| \le \alpha} \le 
 \frac{2\alpha}{\sqrt{Var(S) + \frac{\alpha^2}{3}}} \le
 \frac{2\alpha}{\sqrt{Var(S)}}.\]
 Since $Var(X_i) = x_i^2/3$ and $Var(S) = \dist{x}_2^2/3$, we have:
    \[\prob{|\dotpr{w}{x}| \le \alpha} \le 
    \frac{2\sqrt{3}\alpha}{\dist{x}_2}.\]

\end{proof}
\qed
If we assume that variables in $w$ are \textit{i.i.d.} and bounded, 
$<w,x>$ satisfy assumptions of the Hoefding inequality \cite{Hoeffding:1963}.
This implies that $<w,x>$ is highly concentrated in the interval $(-|x|_2,|x|_2)S$, where $S$ is the standard deviation of $w_i$. 
Given that, $\hat h_p$ is optimal under the assumption that $w$ are \textit{i.i.d.}.
In order to analyze the properties of the the hash functions, we need the following technical observations:
\begin{observation}\label{bounds}
For any $z\in \Rdspace$ where, $\delta_q =d^{\min\{1/2 -1/q,0\}}$ and $1/p + 1/q = 1$:
\[ \dist{z}_p \delta_p \le \dist{z}_2 \le \dist{z}_p \delta_q^{-1}.\]

\end{observation}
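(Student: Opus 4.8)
The statement is nothing more than the two classical $\ell_p$-versus-$\ell_2$ norm comparisons packaged together, so the plan is to reduce it to the monotonicity of $\ell_s$-norms in $s$ and then match exponents using the conjugacy relation $1/p+1/q=1$. Here $\delta_p$ is understood by the same recipe as $\delta_q$, i.e. $\delta_p=d^{\min\{1/2-1/p,\,0\}}$.

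First I would record the two elementary facts: for any $0<s\le t\le\infty$ and any $z\in\Rdspace$,
\[\dist{z}_t \le \dist{z}_s \le d^{1/s-1/t}\,\dist{z}_t .\]
The left inequality follows by normalizing so that $\dist{z}_s=1$, whence $|z_i|\le 1$ for every $i$, so $|z_i|^t\le|z_i|^s$, and summing gives $\dist{z}_t^t\le 1$; the right inequality is Hölder applied to $\sum_i |z_i|^s\cdot 1$ with conjugate exponents $t/s$ and $t/(t-s)$. These are standard, so I would simply invoke them.

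Then I would split on the position of $p$ relative to $2$. If $p\le 2$, then $1/2-1/p\le 0$ so $\delta_p=d^{1/2-1/p}$, while $q\ge 2$ gives $1/2-1/q\ge 0$ and hence $\delta_q=1$; plugging $s=p$, $t=2$ into the display yields $\dist{z}_2\le\dist{z}_p\le d^{1/p-1/2}\dist{z}_2$, which rearranges exactly to $\delta_p\dist{z}_p\le\dist{z}_2\le\delta_q^{-1}\dist{z}_p$. If instead $p\ge 2$, then $\delta_p=1$, and $q\le 2$ gives $\delta_q=d^{1/2-1/q}$ where conjugacy turns the exponent into $1/2-1/q=1/2-(1-1/p)=1/p-1/2$, so $\delta_q^{-1}=d^{1/2-1/p}$; plugging $s=2$, $t=p$ gives $\dist{z}_p\le\dist{z}_2\le d^{1/2-1/p}\dist{z}_p$, i.e. $\delta_p\dist{z}_p\le\dist{z}_2\le\delta_q^{-1}\dist{z}_p$ once more. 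The boundary values $p\in\{1,2,\infty\}$ are covered by the same computation (with $\dist{\cdot}_\infty$ read as the max-norm) or, if one prefers, by continuity.

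There is essentially no obstacle beyond careful bookkeeping: the only thing to watch is how the truncation $\min\{\,\cdot\,,0\}$ in the definition of $\delta$ interacts with the sign of $1/2-1/p$, together with the observation that conjugacy converts $1/2-1/q$ into $1/p-1/2$ — which is precisely what makes the two cases collapse onto the single displayed claim.
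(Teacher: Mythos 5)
Your proof is correct, and since the paper dismisses the observation in one line as ``a direct consequence of the inequality between means,'' you are filling in exactly the same standard $\ell_s$-versus-$\ell_t$ norm comparisons the author has in mind. The only substantive step is the conjugacy bookkeeping $1/2-1/q=1/p-1/2$, which you handle correctly, so there is nothing to flag.
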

This observation is a direct consequence of the inequality between means.
Given this technical observation and the anti-concentration bound we prove the crucial properties of $\hat h_p$:

\begin{observation}[Close points have close hashes for $\hat h_p$] \label{small_1}
For $x,y \in \Rdspace$, if $\dist{x-y}_p \le 1$ then $\forall_{\hat h_p} |\hat h_p(x) - \hat h_p(y)| \le 1$.
\end{observation}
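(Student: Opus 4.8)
The plan is to reduce the claim to a deterministic, worst-case bound on $|\dotpr{w}{x-y}|$ that holds for \emph{every} realization of $w$, which is exactly the point of using a bounded distribution. Write $z = x - y$, so that $\hat h_p(x) - \hat h_p(y) = \floor{d^{1/p-1}\dotpr{w}{x}} - \floor{d^{1/p-1}\dotpr{w}{y}}$. First I would record the elementary fact that for reals $a,b$ with $|a-b|\le 1$ one has $|\floor{a} - \floor{b}|\le 1$: assuming w.l.o.g.\ $a\le b\le a+1$, we get $\floor{a}\le\floor{b}\le\floor{a+1}=\floor{a}+1$. Hence it suffices to show that $d^{1/p-1}\,|\dotpr{w}{z}| \le 1$.

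For that bound, since $|w_i|\le 1$ (indeed $w_i\sim U(-1,1)$), the triangle inequality gives $|\dotpr{w}{z}| \le \sum_i |w_i|\,|z_i| \le \sum_i |z_i| = \dist{z}_1$. Now I would invoke the inequality between means --- the same fact underlying Observation~\ref{bounds} --- in the form $\dist{z}_1 \le d^{1-1/p}\dist{z}_p$, i.e.\ H\"older's inequality applied to $z$ against the all-ones vector with exponents $p$ and $p/(p-1)$ (for $p=\infty$ this reads $\dist{z}_1 \le d\,\dist{z}_\infty$, while $d^{1/p-1}=d^{-1}$, so the argument is unchanged). Using the hypothesis $\dist{z}_p\le 1$ then yields $|\dotpr{w}{z}| \le d^{1-1/p}$, so $d^{1/p-1}|\dotpr{w}{z}|\le d^{1/p-1}\cdot d^{1-1/p} = 1$, which closes the argument.

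There is no serious obstacle here; the proof is a two-line computation once the floor-function observation is in place. The only points requiring care are the direction of the mean inequality ($\dist{\cdot}_1$ is controlled by $\dist{\cdot}_p$ at the cost of a factor $d^{1-1/p}$, which is precisely cancelled by the normalizing scalar $d^{1/p-1}$ in the definition of $\hat h_p$) and the boundary case $p=\infty$. Note also that, unlike the anti-concentration bound of Observation~\ref{anti1}, this statement is deterministic --- it holds for \emph{all} realizations of $w$, not merely with high probability --- precisely because the support of each $w_i$ lies in $[-1,1]$.
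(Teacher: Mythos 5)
Your proof is correct and follows essentially the same route as the paper's: bound $|\dotpr{w}{z}|$ by $\dist{z}_1$ using $|w_i|\le 1$, then apply $\dist{z}_1 \le d^{1-1/p}\dist{z}_p$ so that the normalizer $d^{1/p-1}$ cancels the dimension factor, and conclude via the floor-function observation. Your write-up is slightly more explicit than the paper's (spelling out the floor step and the $p=\infty$ boundary case), but it is the same argument.
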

\begin{proof}
We have:
\begin{displaymath}
    \prob{|\hat h_p(x) - \hat h_p(y)| \le 1} \ge  
    \prob{ \abs{d^{1/p-1} \dotpr{x-y}{w} } \le 1 } 
    .
\end{displaymath}
Since, $\abs{d^{1/p-1} \dotpr{x-y}{v}} \le d^{1/p-1} \dist{x-y}_1 \le \dist{x-y}_p \le 1$, the probability equals~1.
\end{proof}
\qed

\begin{lemma}[Probability of false positives for $\hat h_p$]\label{big_uniform}
For every $p \in [1,\infty]$,
$x,y \in \Rdspace$ and $c > \hat \tau_p = 4\sqrt{3}d^{\max\{1-1/p, 1/2\}}$ such that $\dist{x-y}_p > c$, it holds:

\begin{displaymath}
    \hpfp = \prob{|\hat h_p(x) - \hat h_p(y)| \le 1} < \hat \tau_p /c
    .
\end{displaymath}
\end{lemma}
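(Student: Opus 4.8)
The plan is to reduce the lemma directly to the anti-concentration bound of Observation~\ref{anti1}, combined with the norm comparison of Observation~\ref{bounds}. First I would bound
\[
\hpfp = \prob{|\hat h_p(x) - \hat h_p(y)| \le 1} = \prob{\bigl\lfloor d^{1/p-1}\dotpr{w}{x-y}\bigr\rfloor - \bigl\lfloor d^{1/p-1}\dotpr{w}{y-x}\bigr\rfloor \text{ differ by at most }1},
\]
which (as in \cite{pacukatall}) is at most $\prob{|d^{1/p-1}\dotpr{w}{x-y}| < 2}$: if two floors are within $1$, then the underlying real numbers differ by less than $2$, and here the two real numbers are $d^{1/p-1}\dotpr{w}{x}$ and $d^{1/p-1}\dotpr{w}{y}$, whose difference is $d^{1/p-1}\dotpr{w}{x-y}$. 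So it suffices to bound $\prob{|\dotpr{w}{x-y}| < 2 d^{1-1/p}}$.

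Next I would apply Observation~\ref{anti1} with the fixed vector $z := x-y$ and $\alpha := 2 d^{1-1/p}$, giving
\[
\hpfp \le \prob{|\dotpr{w}{z}| < 2 d^{1-1/p}} \le \frac{2\sqrt{3}\cdot 2 d^{1-1/p}}{\dist{z}_2} = \frac{4\sqrt{3}\, d^{1-1/p}}{\dist{z}_2}.
\]
Now I need to turn the $\dist{z}_2$ in the denominator into $\dist{z}_p > c$. By Observation~\ref{bounds}, $\dist{z}_2 \ge \dist{z}_p \delta_p$ where $\delta_p = d^{\min\{1/2 - 1/p, 0\}}$. Plugging this in and using $\dist{z}_p > c$,
\[
\hpfp \le \frac{4\sqrt{3}\, d^{1-1/p}}{\dist{z}_p \delta_p} < \frac{4\sqrt{3}\, d^{1-1/p}}{c\,\delta_p} = \frac{4\sqrt{3}\, d^{1-1/p - \min\{1/2-1/p,0\}}}{c}.
\]
The exponent simplifies: $1 - 1/p - \min\{1/2 - 1/p, 0\} = \max\{1 - 1/p,\ 1/2\}$ (checking the two cases $p \le 2$ and $p \ge 2$ separately), so the bound becomes $4\sqrt{3}\, d^{\max\{1-1/p,1/2\}}/c = \hat\tau_p/c$, as claimed.

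I do not expect any serious obstacle here; the only care-demanding points are (i) the constant in passing from "floors within $1$" to "reals within $2$" — one should double-check whether the right slack is $2$ or some other small constant, since this constant propagates into $\hat\tau_p$ and thence into the exponent of the algorithm's complexity; and (ii) correctly resolving the $\min$/$\max$ bookkeeping in the exponent, i.e. verifying $\delta_p = d^{\min\{1/2-1/p,0\}}$ is indeed the factor that makes $d^{1-1/p}/\delta_p = d^{\max\{1-1/p,1/2\}}$ in both regimes $p<2$ and $p\ge 2$. The strict inequality in the final bound is inherited from the strict inequality $\dist{z}_p > c$ in the hypothesis.
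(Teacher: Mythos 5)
Your proposal is correct and follows essentially the same route as the paper: reduce to $\prob{|\dotpr{w}{z}| \le 2d^{1-1/p}}$, apply Observation~\ref{anti1}, then use Observation~\ref{bounds} to replace $\dist{z}_2$ by $\delta_p\dist{z}_p > \delta_p c$. You spell out a couple of steps the paper leaves implicit (the floor-to-reals slack of $2$ and the case analysis showing $1-1/p-\min\{1/2-1/p,0\}=\max\{1-1/p,1/2\}$), and there is a small typo in your first display where $\dotpr{w}{x-y}$ and $\dotpr{w}{y-x}$ should read $\dotpr{w}{x}$ and $\dotpr{w}{y}$, but your prose makes the intended meaning clear.
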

\begin{proof}
    Let $z=x-y$. We have:
    $$
        \prob{|\hat h_p(x) - \hat h_p(y)| \le 1} \le
        \prob{|\dotpr{z}{w}| \le 2 d^{1-1/p}} \le
        \frac{4\sqrt{3} d^{1-1/p}}{\dist{z}_2}.
    $$
    The second inequality follows from the Observation \ref{anti1}. 
By Observation \ref{bounds}, $\dist{z}_2 \ge \delta_p \dist{z}_p \ge \delta_p c $, which gives:
    $$
        \prob{|\hat h_p(x) - \hat h_p(y)| \le 1} \le
        4\sqrt{3} \frac{d^{1-1/p}}{\delta_p c} 
       .
    $$
    This ends the proof.
\end{proof}
\qed

Theorem \ref{prev} applied to the $\hat h_p$ hash functions results in case 1. of Theorem \ref{main}.
This improves over the complexities presented in \cite{pacukatall}. Particularly,
when $c$ goes to infinity, the preprocessing time in our algorithm tends to $\Oc(n)$, which
was not the case in the preceding algorithm in \cite{pacukatall}.  Still, the
preprocessing complexity is worse than the version which does not give the
guaranties for false negatives: $\Oc(n^{1+1/c})$. This is the price we pay for the
certainty, that all the 'close' points will be found by the algorithm.

\section{The improved algorithm for $p \ge 2$}\label{algo_2}
In this section, we introduce new LSH function family: $\tilde h_p$ which is  tuned up for $p \ge 2$.
We define $\tilde h_p$ as follows:

\begin{displaymath}
    \tilde h_p(x) = \floor{\delta_q \dotpr{w}{x}}, \text{ where }w \text{ is a random vector from the unit sphere }\spheret.
\end{displaymath}

In order to bound the probability false positive, we need to be able to bound the probability of
$\prob{|\dotpr{w}{x}| < \alpha}$. 
We cannot use the techniques introduced in Section \ref{algo}, because random variables in $w$ are not independent.
Instead, the probability can be elegantly expressed
in geometrical terms. $\dotpr{w}{x}$ can be seen as the first coefficient of a random point from $\spheret$. The probability of the complementary event is
proportional to the area of two spherical caps of  distance $\alpha$ from the
origin of $\spheret$. The fraction between the area of
these spherical caps and the area of the unit ball can be expressed as
$I_{\alpha^2}(1/2, (d-1)/2)$ for $|x|_2 =1$, where $I_x(a,b)$ is a regularized
incomplete beta function \cite{sphericalcap}. Bounding the incomplete beta function gives the following observation:

\begin{observation}[The anti-concentration bound for $\spheret$]\label{geom}
Let $x\in \spheret$ be a given unit vector and $w\in \spheret$ be a random unit vector, then
\[\prob{|\dotpr{w}{x}| < \alpha} \le \alpha \sqrt{d} .\]
\end{observation}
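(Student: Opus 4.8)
The plan is to estimate the fraction of the unit sphere $\spheret$ lying in the slab $\{w : |w_1| < \alpha\}$, which equals $\prob{|\dotpr{w}{x}| < \alpha}$ after rotating so that $x = e_1$. As noted in the text, the complementary probability is $1 - I_{\alpha^2}(1/2,(d-1)/2)$, so it suffices to show $I_{\alpha^2}(1/2,(d-1)/2) \ge 1 - \alpha\sqrt{d}$, or equivalently to upper bound $\prob{|w_1| < \alpha}$ directly. I would work with the surface-area integral: the ratio is proportional to $\int_{-\alpha}^{\alpha} (1-t^2)^{(d-3)/2}\,dt$ divided by the same integral over $[-1,1]$. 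So the whole statement reduces to the inequality
\[
\frac{\int_{-\alpha}^{\alpha} (1-t^2)^{(d-3)/2}\,dt}{\int_{-1}^{1} (1-t^2)^{(d-3)/2}\,dt} \le \alpha\sqrt{d}.
\]

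The numerator is at most $2\alpha$, since $(1-t^2)^{(d-3)/2} \le 1$ on $[-\alpha,\alpha]$ (valid for $d \ge 3$; the cases $d=1,2$ are trivial or handled separately). For the denominator, I would bound $\int_{-1}^{1}(1-t^2)^{(d-3)/2}\,dt = B(1/2,(d-1)/2)$ from below by something of order $1/\sqrt{d}$; concretely, using the Beta–Gamma relation $B(1/2,(d-1)/2) = \Gamma(1/2)\Gamma((d-1)/2)/\Gamma(d/2) = \sqrt{\pi}\,\Gamma((d-1)/2)/\Gamma(d/2)$ together with the standard estimate $\Gamma((d-1)/2)/\Gamma(d/2) \ge \sqrt{2/d}$ (which follows from log-convexity of $\Gamma$ or from $\Gamma(s+1/2)/\Gamma(s+1) \le s^{-1/2}$). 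Combining, the ratio is at most $2\alpha / (\sqrt{\pi}\sqrt{2/d}) = \alpha\sqrt{2d/\pi} \le \alpha\sqrt{d}$, since $2/\pi < 1$.

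The main obstacle is getting a clean lower bound on the denominator $B(1/2,(d-1)/2)$ with the right dependence on $d$ and a constant good enough that the final bound comes out as exactly $\alpha\sqrt{d}$ rather than $C\alpha\sqrt{d}$ for some $C>1$ — the text wants the constant to be $1$ because it sits in the exponent of the algorithm's complexity. An alternative, perhaps cleaner, route is to avoid Beta functions entirely: realize $w$ as $g/\|g\|_2$ for $g$ a standard Gaussian vector in $\Rdspace$, so $\prob{|w_1| < \alpha} = \prob{g_1^2 < \alpha^2 \|g\|_2^2}$; then use concentration of $\|g\|_2^2$ around $d$ and the anti-concentration of the single Gaussian $g_1$, i.e. $\prob{|g_1| < \beta} \le \beta\sqrt{2/\pi}$. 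Balancing the rare event $\{\|g\|_2^2 > (1+\epsilon)d\}$ against $\prob{|g_1| < \alpha\sqrt{(1+\epsilon)d}} \le \alpha\sqrt{(1+\epsilon)d}\sqrt{2/\pi}$ gives the bound up to the constant; I would then check that the slack $\sqrt{2/\pi} < 1$ absorbs the $\sqrt{1+\epsilon}$ factor for a suitable small $\epsilon$, recovering $\alpha\sqrt{d}$ on the nose. Either way the integral/Beta-function estimate is elementary; the care is entirely in the constants.
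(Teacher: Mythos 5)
Your main argument takes the same route as the paper: interpret $\prob{|\dotpr{w}{x}|<\alpha}$ as the (normalized) surface measure of a symmetric slab on $\spheret$, identify it with $I_{\alpha^2}(1/2,(d-1)/2)$, reduce to the ratio of one-dimensional integrals, show this ratio is at most $2\alpha/B(1/2,(d-1)/2)$, and finish by bounding $B(1/2,(d-1)/2)$ from below by order $1/\sqrt{d}$. The two differences are cosmetic but worth noting. First, where the paper establishes the numerator bound by computing $f''$ and arguing concavity of $f(\alpha)=I_{\alpha^2}(1/2,(d-1)/2)$ for $d>2$ (so that $f(\alpha)\le f(0)+\alpha f'(0)$), you simply observe $(1-t^2)^{(d-3)/2}\le 1$ on $[-\alpha,\alpha]$ for $d\ge 3$; that is a cleaner way to land on the same inequality. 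Second, for the Beta-function lower bound the paper invokes the inequality of Greni\'e et al.\ $B(x,y)\ge x^{x-1}y^{y-1}/(x+y)^{x+y-1}$, whereas you use the Beta--Gamma relation together with the Gamma-ratio estimate $\Gamma((d-1)/2)/\Gamma(d/2)\ge\sqrt{2/d}$, which indeed follows from log-convexity of $\Gamma$ (i.e.\ $\Gamma(d/2)^2\le\Gamma((d-1)/2)\Gamma((d+1)/2)$) or from Gautschi's inequality. Be careful, though: the parenthetical ``$\Gamma(s+1/2)/\Gamma(s+1)\le s^{-1/2}$'' is the wrong direction for what you need; the relevant direction is the lower bound $\Gamma(x+1/2)/\Gamma(x+1)\ge(x+1)^{-1/2}$. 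Your Gaussian alternative is on the right track but, as stated, conditioning on the rare event $\{\|g\|_2^2>(1+\epsilon)d\}$ does not give a bound linear in $\alpha$ because that rare-event probability is independent of $\alpha$; a cleaner way to salvage it is to write $|g_1|<\alpha\|g\|_2$ as $|g_1|<\tfrac{\alpha}{\sqrt{1-\alpha^2}}R$ with $R^2=\sum_{i\ge 2}g_i^2$ independent of $g_1$, condition on $R$, apply Gaussian anti-concentration, and then use $\Ex[R]\le\sqrt{d}$, noting that the target bound is vacuous unless $\alpha\le 1/\sqrt{d}$ so the factor $1/\sqrt{1-\alpha^2}$ is harmless.
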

\begin{proof}
    As stated before, the complement of the above probability equals the area of two spherical caps of the normalized $(d-1)$-dimensional sphere (i.e. the area of the sphere equals 1).
    For a spherical cap let $ 0 \le \phi\le \pi /2$ denote a colatitude angle, i.e. the largest angle between $e_1$ and a vector from the spherical cap.
    As stated in \cite{sphericalcap}, the area of the spherical cap is given by $1/2 I_{\sin^2{\phi}}((d-1)/2, 1/2)$. 
    Substituting $\alpha = \cos{\phi}$, we have:

  \begin{align*}
  f(\alpha) & = \prob{|\dotpr{w}{x}| < \alpha} =  I_{\sin^2{\phi}}((d-1)/2, 1/2) \\
  & =   I_{1-\alpha^2}((d-1)/2, 1/2) =  I_{\alpha^2}( 1/2, (d-1)/2),
  \end{align*}
  
    where the last equality follows from the fact that $I_x(a,b) = I_{1-x}(b,a)$.
    By the definition of $I_x(a,b)$, we have 
    \[f'(\alpha) = 
    \frac{2\alpha \alpha^{-1}(1-\alpha^2)^{\frac{d-1}{2}}}{B(1/2,(d-1)/2)} = 
    \frac{2(1-\alpha^2)^{\frac{d-1}{2}}}{B(1/2,(d-1)/2)}\]
    and
    \[f''(\alpha) = 
    \frac{-2\alpha(d-3)(1-\alpha^2)^{\frac{d-5}{2}}}{B(1/2,(d-1)/2)},\]

    where $B(a,b)$ is a beta function. For $d=2$ the function $f$ is convex, so 
    \[f(\alpha) \le (1-\alpha) f(0) + \alpha f(1) = \alpha.\] 
    For $d>2$, the function is concave and 
    \[f(\alpha) \le f(0) + \alpha f'(0) = \frac{2\alpha}{B(1/2, (d-1)/2)}.\]

    The last step is proving, that $B(1/2, (d-1)/2) \ge \frac{2}{\sqrt{d}}$. Greni{\'e} et al. \cite{beta} proved that:
\[B(x,y) \ge \frac{x^{x-1}y^{y-1}}{(x+y)^{x+y-1}}.\]
Applying this inequality gives the following bound:
\[B(1/2, (d-1)/2) \ge 
\frac{(1/2)^{-1/2}(\frac{d-1}{2})^{\frac{d-3}{2}}}{(\frac{d}{2})^{\frac{d-2}{2}}} = 
\frac{(1/2)^{-1/2}(\frac{d-1}{d})^{\frac{d-3}{2}}}{(\frac{d}{2})^{\frac{1/2}{2}}} =
\frac{2(\frac{d-1}{d})^{\frac{d-3}{2}}}{\sqrt{d}},\]
which ends the proof, since
$g(d) = (\frac{d-1}{d})^{\frac{d-3}{2}}$ is decreasing for $d\ge3$ and $g(3) = 1$. 

%https://www.wolframalpha.com/input/?i=(1-1%2Fx)%5E%7Bx-3%7D

%    which follows from the inductive argument. For $d=2$, $B(1/2, 1/2) = \pi \ge 2\sqrt{2}$. For $d>2$, we have
%    \[B(1/2, (d-1)/2) = \frac{(d-3)/2}{(d-2)/2}\]

\end{proof}
\qed

For large $d$, $g(d) \approx e^{-1/2}$, what gives a slightly better bound.
Given the above anti-concentration bound we prove the crucial properties of $\tilde h_p$:
\begin{observation}[Close points have close hashes for $\tilde h_p$] \label{small}
For $x,y \in \Rdspace$, if $\dist{x-y}_p < 1$ then $\forall_{\tilde h_p} |\tilde h_p(x) - \tilde h_p(y)| \le 1$.
\end{observation}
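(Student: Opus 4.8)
The plan is to mimic the argument already used for $\hat h_p$ in Observation~\ref{small_1}, only now with the scaling factor $\delta_q$ in place of $d^{1/p-1}$ and with $w$ a unit vector of $\spheret$ rather than a vector of bounded independent coordinates. As there, the key reduction is that $|\tilde h_p(x) - \tilde h_p(y)| \le 1$ follows deterministically (for \emph{every} choice of $w$, not just with high probability) once we show $|\delta_q \dotpr{w}{x-y}| \le 1$; indeed $\floor{a} - \floor{b} \le 1$ whenever $|a-b| \le 1$. So it suffices to bound $|\dotpr{w}{z}|$ for $z = x-y$ with $\dist{z}_p < 1$.

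Next I would apply Cauchy--Schwarz in the form $|\dotpr{w}{z}| \le \dist{w}_2\,\dist{z}_2 = \dist{z}_2$, using that $w \in \spheret$ has $\dist{w}_2 = 1$. The remaining task is purely the norm comparison of Observation~\ref{bounds}: we need $\delta_q \dist{z}_2 \le 1$. Since $1/p + 1/q = 1$, Observation~\ref{bounds} (applied with the roles of $p$ and $q$ swapped, i.e.\ $\dist{z}_q \delta_q \le \dist{z}_2 \le \dist{z}_q \delta_p^{-1}$, and more directly its upper bound $\dist{z}_2 \le \dist{z}_p \delta_q^{-1}$) gives $\delta_q \dist{z}_2 \le \dist{z}_p < 1$. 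Chaining the inequalities yields $|\delta_q \dotpr{w}{z}| \le \delta_q \dist{z}_2 \le \dist{z}_p < 1$, hence $|\tilde h_p(x) - \tilde h_p(y)| \le 1$, and the probability of this event is $1$.

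There is essentially no obstacle here: the only thing to be careful about is invoking Observation~\ref{bounds} with the correct index, namely that the multiplier $\delta_q$ attached to $\tilde h_p$ is exactly the one that compensates the gap between $\dist{\cdot}_2$ and $\dist{\cdot}_p$ in the direction needed (for $p \ge 2$ one has $\delta_q = d^{1/2 - 1/q} = d^{1/p - 1/2} \le 1$ balancing $\dist{z}_2 \le d^{1/p-1/2}\dist{z}_p$... wait, for $p\ge 2$, $\dist{z}_2 \le \dist{z}_p$ and $\delta_q = 1$, while for $p \le 2$ the factor $\delta_q < 1$ does the compensation). In all cases the product $\delta_q \dist{z}_2 \le \dist{z}_p$ holds by Observation~\ref{bounds}, which is all that is used.
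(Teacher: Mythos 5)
Your proof is correct and follows essentially the same route as the paper's: reduce to showing $|\delta_q \dotpr{w}{z}| \le 1$ deterministically, apply Cauchy--Schwarz (the paper's ``Schwarz inequality'') to get $|\dotpr{w}{z}| \le \dist{z}_2$ since $\dist{w}_2 = 1$, and then use the upper bound of Observation~\ref{bounds}, $\dist{z}_2 \le \dist{z}_p \delta_q^{-1}$, to conclude $\delta_q \dist{z}_2 \le \dist{z}_p < 1$. One small remark: the parenthetical sanity-check at the end has the two cases reversed --- for $p \le 2$ one has $\delta_q = 1$ and $\dist{z}_2 \le \dist{z}_p$ outright, whereas for $p \ge 2$ it is $\delta_q = d^{1/p - 1/2} < 1$ that compensates $\dist{z}_2 \le d^{1/2 - 1/p}\dist{z}_p$ --- but this does not affect the argument, which uses only the inequality of Observation~\ref{bounds} as stated.
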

\begin{proof}
We have:
\begin{displaymath}
    \prob{|\tilde h_p(x) - \tilde h_p(y)| \le 1} \ge  \prob{ \abs{ \dotpr{x-y}{w} }\delta_q \le 1 }
    .
\end{displaymath}

Applying, in turn, the Schwarz inequality and Observation \ref{bounds} we get:
 \[\delta_q|\dotpr{x-y}{w}| \le \delta_q\dist{x-y}_2 \le \dist{x-y}_p \le 1.\] 
Hence, the points will inevitably hash into the same or adjacent buckets.
\end{proof}
\qed

\begin{lemma}[Probability of false positives for $\tilde h_p$]\label{big_ball}
For every $p \in [1,\infty]$,
$x,y \in \Rdspace$ and $c > \tilde \tau_p = 2d^{1/2+|1/2-1/p|}$ such that $\dist{x-y}_p > c$, it holds:

\begin{displaymath}
    \tpfp = \prob{|\tilde h_p(x) - \tilde h_p(y)| \le 1} < \tilde \tau_p / c
    .
\end{displaymath}
\end{lemma}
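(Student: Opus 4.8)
The plan is to mirror the proof of Lemma~\ref{big_uniform}, replacing the log-concave anti-concentration bound by the spherical one of Observation~\ref{geom}. Put $z = x - y$, so $\dist{z}_p > c$. Since two integers $\floor{a}$ and $\floor{b}$ differ by at most $1$ only when $|a-b| < 2$, the event $|\tilde h_p(x) - \tilde h_p(y)| \le 1$ forces $\delta_q |\dotpr{z}{w}| < 2$, i.e. $|\dotpr{z}{w}| < 2\delta_q^{-1}$. Hence
\[ \tpfp \le \prob{|\dotpr{z}{w}| < 2\delta_q^{-1}}, \]
and it remains to bound the right-hand side.

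Next I would normalize $z$. Writing $\dotpr{z}{w} = \dist{z}_2 \dotpr{z/\dist{z}_2}{w}$ and applying Observation~\ref{geom} to the fixed unit vector $z/\dist{z}_2$ gives, for every $\beta > 0$, $\prob{|\dotpr{z}{w}| < \beta} \le \beta\sqrt{d}/\dist{z}_2$. Taking $\beta = 2\delta_q^{-1}$ yields
\[ \tpfp \le \frac{2\sqrt{d}\,\delta_q^{-1}}{\dist{z}_2}. \]
Then Observation~\ref{bounds} gives $\dist{z}_2 \ge \delta_p \dist{z}_p > \delta_p c$ (the strict inequality being inherited from $\dist{z}_p > c$), so $\tpfp < 2\sqrt{d}\,\delta_q^{-1}\delta_p^{-1}/c$.

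The last step is a short exponent computation. Because $1/q = 1 - 1/p$ we have $1/2 - 1/q = 1/p - 1/2$, so $\delta_q^{-1} = d^{\max\{1/2 - 1/p,\,0\}}$ and $\delta_p^{-1} = d^{\max\{1/p - 1/2,\,0\}}$; exactly one of these two exponents is zero and the other equals $|1/2 - 1/p|$, whence $\delta_q^{-1}\delta_p^{-1} = d^{|1/2 - 1/p|}$. Substituting, the bound collapses to $2 d^{1/2 + |1/2 - 1/p|}/c = \tilde\tau_p/c$, which is the claim. (Equivalently, one may split into the regimes $p \ge 2$ and $p < 2$, which is precisely where the absolute value in $\tilde\tau_p$ originates.)

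I do not expect a genuine obstacle here; the proof is routine given Observations~\ref{geom} and~\ref{bounds}. The only points requiring care are (i) correctly converting ``floors differ by at most $1$'' into the factor $2\delta_q^{-1}$ and keeping the inequality on $\dist{z}_p$ strict so that the conclusion is a strict bound, and (ii) the bookkeeping that merges $\delta_p$ and $\delta_q$ into $d^{|1/2 - 1/p|}$. I would also double-check that Observation~\ref{geom}, stated for unit vectors, transfers to arbitrary $z$ purely through the scaling identity above, with no hidden dependence on $\dist{z}_2$.
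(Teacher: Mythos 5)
Your proof is correct and takes essentially the same route as the paper: reduce the floor-difference event to $|\dotpr{z}{w}| \lesssim 2\delta_q^{-1}$, apply the spherical anti-concentration bound of Observation~\ref{geom} to the normalized vector $z/\dist{z}_2$, and then convert $\dist{z}_2$ to $\dist{z}_p$ via Observation~\ref{bounds} and collapse $\delta_p^{-1}\delta_q^{-1}$ to $d^{|1/2-1/p|}$. The only cosmetic difference is the order in which Observations~\ref{geom} and~\ref{bounds} are invoked, which does not affect the argument.
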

\begin{proof}

    Let $z=x-y$ and $X =\dist{z}_2^{-1} \dotpr{w}{z} $, be a random variable. 

    We have:
    $$
        \prob{|\tilde h_p(x) - \tilde h_p(y)| \le 1} \le
        \prob{|X|\dist{z}_2 \le 2 \delta_q^{-1}} \le
        \prob{|X| \le 2(\dist{z}_p \delta_q \delta_p)^{-1}}.
    $$
    The second inequality follows from the Observation \ref{bounds}. 
    Since $\delta_q \delta_p = d^{-|1/2-1/p|}$, we have:
    $$
        \prob{|\tilde h_p(x) - \tilde h_p(y)| \le 1} \le
        \prob{|X| \le 2\dist{z}_p^{-1} d^{|1/2-1/p|}} \le
        \prob{|X| \le 2c^{-1}d^{|1/2-1/p|}}
        .
    $$
    Applying the anti-concentration bound ends the proof.
\end{proof}
\qed

Theorem \ref{prev} applied to the $\tilde h_p$ hash functions results in case 2. of Theorem \ref{main}.
For $p \in [2,\infty]$ we have asymptotically the same constraints on $c$ ($c = \Oc(d^{1-1/p})$).
In addition, for any $p \in [1,\infty]$ we have $\tpfp < \hpfp$. Although the improvement in the bound for $\pfp$ is only in constant, this might be important for
practical cases, because this constant is present in the exponent of the complexities of the \CNN{} algorithm.
For $p \in [1,2)$ there are discrepancies between  the constraints on $c$, depending on the hash functions used. Particularly, the hash functions $h_p$ and $\hat h_p$
work for any $c = \Omega(\sqrt{d})$ for $p \in [1,2)$, while the $\tilde h_p$
works for $c = \Omega(d^{1/p})$.  

A natural approach for optimizing both the probability of false positives and the constraint on $c$ would be 
to consider hash functions of form $\check h_p = \floor{<x,w>}$, where $w$ is a
random point from $\mathbb{S}_q^{(d-1)}$ for $1/q + 1/p = 1$.
\footnote{There are many
possibilities of choosing a random point from a sphere in $l_p$. We
conjecture that the bounds should hold for both geometric surface measure
and cone measure.}  
The H\"{o}lder inequality implies the property of 
'close' points being hashed to adjacent buckets.  In order to prove the bounds for false
positives, we need to bound $\prob{|<x,w>| < \epsilon }$. We conjecture
that this probability can be bounded by $\Oc(\epsilon \sqrt{d})$ for any $p \in
[1,2]$. This is true for $p=2$, since $\check h_2 = \tilde h_2$. Also for large $d$,
$\check h_1 \approx \hat h_1$, because these two functions differ only by the factor of $\max_i{|u_i|}$, where $u_i \sim U(-1,1)$. 
This factor will be close to $1$ for large $d$. Still, techniques used to prove bounds for $\tilde h_p$ and $\hat h_p$ seem to be insufficient to prove
more general bounds for $\check h_p$.

\section{Conclusion and Future Work}

We introduced hash functions $\hat h_p$ and
$\tilde h_p$. Using these functions, we were able to improve the query and
the preprocessing time complexities for the \CNN{} for any $p \in [1,\infty)$.  This is a major
improvement over the results presented in \cite{pacukatall}. 

The future work concerns further relaxing of the restrictions on the approximation factor $c$
and reducing the time complexity of the algorithm or proving that these
restrictions are essential. We wish to match the time complexities given in
\cite{motwani} or show that the achieved bounds are optimal. 

Also, many interesting theoretical problems arise. Consider for instance a random (e.g.,
random in cone measure) point $v$ from $\mathbb{S}_q^{(d-1)}$ and a fixed point $w$ from
$\sphere$ ($1/p + 1/q = 1$, $p\in [1,2)$). A problem can be  posed, whether the 
probability $\prob{|\dotpr{w}{x}| < \epsilon}$ can be bounded. We conjecture, that this
probability is $\Oc(\epsilon\sqrt{d}$).

\section{Acknowledgments}

This work was supported by ERC PoC project PAAl-POC 680912. 
%TODO podziekowania dla osob. 

\bibliography{bib}

%\appendix

\end{document}